\theoremstyle:=definition,remark,plain\do{%
        \expandafter\g@addto@macro\csname th@\theoremstyle\endcsname{%
            \addtolength\thm@preskip\parskip
            }%
        }
\definecolor{dnrbl}{rgb}{0,0,0.3}
\definecolor{dnrgr}{rgb}{0,0.3,0}
\definecolor{dnrre}{rgb}{0.5,0,0}
\theoremstyle{plain}
\newtheorem{thm}{Theorem}[section]
\newtheorem{prop}[thm]{Proposition}
\newtheorem{lem}[thm]{Lemma}
\newtheorem{coro}[thm]{Corollary}
\newtheorem{defi}[thm]{Definition}
\numberwithin{equation}{subsection}
\let\c@table\c@figure
\newcommand{\restr}{\upharpoonright}  
\newcommand{\de}{\downarrow} 
\newcommand{\bigo}[1]{\mathop{\bf O}\/\left({#1}\right)}
\newcommand{\NST}{Nies, Stephan and Terwijn\ }
\newcommand{\ml}{Martin-L\"{o}f }
\newcommand{\eg}{e.g.\ }
\newcommand{\ie}{i.e.\ }
\newcommand{\ce}{c.e.\ }
\newcommand{\pf}{prefix-free }
\renewenvironment{abstract}
 { \normalsize
  \list{}{
    \setlength{\leftmargin}{.0cm}%
    \setlength{\rightmargin}{\leftmargin}%
    }%
  \item {\bf \abstractname.} \relax}
 {\endlist}
\newcommand{\pfm}{prefix-free machine }
\newcommand{\pfmn}{prefix-free machine}
\newcommand{\KG}{Ku\v{c}era-G\'{a}cs }
\title{Pointed computations and \ml randomness\thanks{{\em Date:} \today. Barmpalias was supported by the 
1000 Talents Program for Young Scholars from the Chinese Government, grant no.\ D1101130.
Additional support was received by
the Chinese Academy of Sciences (CAS) and the Institute of Software of the CAS.
Lewis-Pye was supported by a Royal Society University 
Research Fellowship.
Angsheng Li was partially supported by the 
National Basic Science Program (973 Program Group) entitled 
Cyber Information and Computational Theory on Big Data with Grant No. 2014CB340302.}}
\author{George Barmpalias  \and Andrew Lewis-Pye \and Angsheng Li}
\date{{\small\em For Barry, for whom the magnificent incomputability of the world was a deeply held belief.\\ The only response? An attempt at understanding this chaos at a higher order.}}
\begin{document}
\maketitle
\begin{abstract}
Schnorr showed that a real $X$ is \ml random if and only if $K(X\restr_n)\geq n-c$ 
for some constant $c$ and all $n$, where $K$ denotes the \pf complexity function. Fortnow (unpublished) and \NST
\cite{MR2140044} observed that the condition $K(X\restr_n)\geq n-c$ can be replaced with
$K(X\restr_{r_n})\geq r_n-c$, for any
fixed increasing computable sequence $(r_n)$, in this characterization.
The purpose of this note is to establish  
the following generalisation of this fact. 
We show that $X$ is \ml random
if and only if $\exists c\ \forall n\ K(X\restr_{r_n})\geq r_n-c$, where $(r_n)$ is any fixed 
{\em pointedly $X$-computable} sequence, in the sense that $r_n$ is computable from
$X$ in a self-delimiting way, so that at most the first $r_n$ bits of $X$ are queried in the computation
of $r_n$. 
On the other hand,  we also show that there are reals $X$ 
which are very far from being \ml random, 
but for which there exists some
$X$-computable sequence $(r_n)$ such that  $\forall n\ K(X\restr_{r_n})\geq r_n$.  
\end{abstract}
\vfill
\noindent{\bf George Barmpalias}\\[0.1em]
\noindent State Key Lab of Computer Science, 
Inst.\ of Software, Chinese Acad.\ Sci., Beijing, China.
School of Mathematics and Statistics, 
Victoria University of Wellington, New Zealand. \\
\textit{E-mail:} \texttt{\textcolor{dnrgr}{barmpalias@gmail.com}}.
\textit{Web:} \texttt{\href{http://barmpalias.net}{http://barmpalias.net}}
\vfill
\noindent{\bf Andrew Lewis-Pye}\\[0.1em]  
\noindent Department of Mathematics,
Columbia House, London School of Economics, 
Houghton St., London, WC2A 2AE, United Kingdom. \\
\textit{E-mail:} \texttt{\textcolor{dnrgr}{A.Lewis7@lse.ac.uk.}}
\textit{Web:} \texttt{\textcolor{dnrre}{http://aemlewis.co.uk}} 
\vfill
\noindent{\bf Angsheng Li}\\[0.1em]
\noindent State Key Lab of Computer Science, 
Inst.\ of Software, Chinese Acad.\ Sci., Beijing, China. \\
\textit{E-mail:} \texttt{\textcolor{dnrgr}{angsheng@ios.ac.cn.}}
 \vfill\thispagestyle{empty}
\clearpage
\newcommand{\upfm}{universal prefix-free machine }
\newcommand{\pfn}{prefix-free}
\newcommand{\upfmn}{universal prefix-free machine}
\newcommand{\lhs}{left-hand-side }
\newcommand{\rhs}{right-hand-side }
\newcommand{\nua}{\nu_{\ast}}
\newcommand{\sz}{$\Sigma^0_1$ }
\newcommand{\BHMN}{Bienvenu, H{\"{o}}lzl, Miller and Nies }
\newcommand{\BGKNT}{ Bienvenu, Greenberg, Ku\v{c}era, Nies and Turetsky }
\newcommand{\CDFT}{Chalcraft, Dougherty, Freiling, and Teutsch }
\newcommand{\BDM}{Barmpalias, Downey, and McInerney }

\section{Introduction}
A well known result of Schnorr (see Chaitin \cite{MR0411829}) is that Martin-L\"{o}f's notion of
algorithmic randomness from \cite{MR0223179} can be expressed in terms of incompressibility with
respect to \pf machines. In particular, a real $X$ is \ml random if and only if $\exists c\forall n\ K(X\restr_n)>n-c$,
where $K$ denotes the \pf complexity function. The latter condition says that there exists a constant
$c$ such that all the initial segments of $X$ are {\em $c$-incompressible} (in a \pf sense).
As reported in Downey and Hirschfeldt \cite[Proposition 6.1.4]{rodenisbook}, 
Fortnow (unpublished) and \NST \cite{MR2140044}
showed that Schnorr's characterisation remains valid if we replace the condition
$\exists c\forall n\ K(X\restr_n)>n-c$ with 
$\exists c\forall n\ K(X\restr_{r_n})>r_n-c$, where $(r_n)$ is any computable increasing sequence.

In this note we consider the extent to which this fact can be generalised to incomputable
increasing sequences $(r_n)$. It is well known that there are reals which are not \ml random, yet 
 have infinitely many $0$-incompressible initial segments. Hence this characterisation does not hold
for arbitrary increasing sequences $(r_n)$. We consider the case when $(r_n)$ is computable from $X$.
Our main result is that if $(r_n)$ is computable from $X$ in a certain restricted way, then
$X$ is \ml random if and only if $\exists c\forall n\ K(X\restr_{r_n})>r_n-c$. 
On the other hand we show how to construct reals $X$ 
and $X$-computable sequences $(r_n)$ such that
the above equivalence fails, so we have
$\forall n\ K(X\restr_{r_n})\geq r_n$ but $X$ is very far from being \ml random.

\subsection{Pointed computability}
Our main result is that if $X$ computes $(r_n)$ in a certain natural fashion, then
$\exists c\forall n\ K(X\restr_{r_n})>r_n-c$ is a  sufficient and necessary condition for the \ml randomness of $X$.
In this section we formalise the notion of oracle-computability required in order for this equivalence to hold, which we call pointed computability.

A Turing functional $\Phi$ can be thought of as a machine which takes as inputs a number $n$ and 
a program $\sigma$, and either halts on these inputs producing a number $\Phi^{\sigma}(n)$ as output,   or else diverges.
The consistency of $\Phi$ requires that if $\rho_0\subseteq\rho_1$ and $\Phi^{\rho_0}(n)\de$, then the computation $\Phi^{\rho_1}(n)$ is identical to that for 
$\Phi^{\rho_0}(n)$, yielding the same output.
Then $\Phi^X(n)$ can be defined as $\lim_s \Phi^{X\restr_s}(n)$. 
Without loss of generality, given a Turing functional $\Phi$, a string $\rho$ and a number $n$, we may assume that
$\Phi^{\rho}(n)\de$ implies 
$|\rho|\geq n$ and $|\rho|\geq \Phi^{\rho}(n)$.
This is a standard convention and it is not hard to see that if $\Psi^X=Z$ for two reals and a Turing functional 
$\Psi$ which might not obey the convention, then there exists a Turing functional $\Phi$ which does obey the stated convention and for which 
$\Phi^X=Z$.

Given a Turing functional $\Phi$, a number $n$ and strings $\rho_0,\rho_1$, the consistency property says that
if $\Phi^{\rho_0}(n)\neq\Phi^{\rho_1}(n)$ then we must have that $\rho_0\ |\ \rho_1$,
\ie the finite oracles differ at a digit which is less than
$\min\{|\rho_0|,|\rho_1|\}$.
The following definition is based on a more stringent consistency requirement.

\begin{defi}[Pointed computations]\label{GmcyBzTi6F}
Given a real $X$, we say that a sequence 
$(r_n)$ is pointedly $X$-computable  if $(r_n)$ is (strictly) increasing and there exists a Turing functional $\Phi$ such that $\Phi^{X \upharpoonright_{r_n}}(n)\downarrow =r_n$ for each $n$. 
\end{defi}
The latter condition in Definition \ref{GmcyBzTi6F} says
that some oracle Turing machine computes each $r_n$ from $X$ with oracle-use bounded above by $r_n$.
Note that, without loss of generality, we may assume that the oracle-use in this computation is exactly $r_n$.

As a typical example of pointed computations (suppressing the monotonicity requirement for now), 
consider an oracle machine
which starts on input $n$ by reading increasingly longer initial segments of 
the oracle tape, and eventually stops after $s>n$ steps, with output $s$.
If, for a given $X$, such a machine converges for every $n$, then it produces
a pointed computation in the sense of Definition \ref{GmcyBzTi6F}.
Another example is the settling time of a non-computable \ce set $A$: let $r_0=0$ and for each
$n$ let $r_{n+1}$ be the least number which is larger than $r_n$ and such that 
$A_{s}\restr_n=A\restr_n$ for all $s\geq r_{n+1}$.  Then $(r_i)$ is pointedly $A$-computable and
non-computable.

Later we will note that there are weaker notions of computability which suffice for our
characterization of \ml randomness. One such notion is the condition that
$K(r_n\ |\ \tau)=\bigo{1}$ for all $n$
and all $\tau\supseteq X\restr_{r_n}$. 
Note that the latter is the non-uniform version of the notion of
Definition \ref{GmcyBzTi6F}.

\subsection{Our results}
Our main result is the following, which we prove in Section \ref{7ZZctiXHJr}.
\begin{thm}[Randomness condition]\label{fSVU93INe}
Suppose that $(r_n)$ is pointedly computable from a real $X$. 
Then $X$ is \ml random if and only if
$\exists c\ \forall n\ K(X\restr_{r_n})>r_n-c$.
\end{thm}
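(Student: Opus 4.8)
The plan is as follows. The forward implication is nothing more than Schnorr's theorem read along the sequence $(r_n)$: if $X$ is \ml random then $\exists c\ \forall k\ K(X\restr_k)>k-c$, and specialising $k$ to the (increasing) values $r_n$ gives the stated condition. All the content is in the converse, which I would prove directly, by upgrading incompressibility along $(r_n)$ to incompressibility at every length and then invoking Schnorr's theorem again. The obstacle to any naive argument is that the gaps $r_{n+1}-r_n$ may be unbounded; the role of pointedness is precisely to let a \pf machine recognise the terms of $(r_n)$ from initial segments of $X$ of length at most that term, and this is exactly what is needed to pay for a gap ``for free''.

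So assume $\exists c\ \forall n\ K(X\restr_{r_n})>r_n-c$, fix an arbitrary length $m$, and let $q$ be the least term of $(r_n)$ exceeding $m$, so that $X\restr_m\subseteq X\restr_q$ and $(r_n)$ meets the interval $(m,q]$ only at $q$. I would describe $X\restr_q$ by a shortest \pf description of $X\restr_m$ (from which a decoder reads off both $X\restr_m$ and its length $m$) followed by the $q-m$ raw bits $X(m)\cdots X(q-1)$. The decoder reconstructs $X\restr_m$ and then extends its oracle one position at a time from the raw bits, maintaining a frontier $R$ (initially $m$): at each stage it runs $\Phi^{X\restr_R}(n')$ for all $n'\le R$ until each computation either halts or queries position $R$, and it declares a detection whenever some such computation halts with output equal to its (initial-segment) oracle-use. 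If a detection at a length in $(m,R]$ occurs it halts and outputs that initial segment; otherwise it reads the next raw bit, advancing the frontier to $R+1$.

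Two facts make this decoder behave correctly. Since $(r_n)$ is a total sequence, $\Phi^X(n')$ converges for every $n'$, so each of the finitely many computations at frontier $R$ halts or blocks in finite time; and by consistency of $\Phi$ the equation $\Phi^{X\restr_\ell}(n')\de=\ell$ holds for exactly one length, namely $\ell=r_{n'}$, so detections pick out precisely the terms of $(r_n)$. Hence on the given program the frontier climbs through $m+1,\dots,q$ without ever detecting anything until, at frontier $q$, the computation $\Phi^{X\restr_q}$ on the index of $q$ halts; the decoder then stops having read exactly $q-m$ raw bits, so the construction is genuinely \pf. This yields $K(X\restr_q)\le K(X\restr_m)+(q-m)+\bigoo$, and combining with $K(X\restr_q)>q-c$ the term $q$ cancels to give $K(X\restr_m)>m-c-\bigoo$. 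As $m$ was arbitrary, this is Schnorr's incompressibility condition with constant $c'=c+\bigoo$, so $X$ is \ml random.

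The step I expect to require the most care is exactly this self-delimitation: arranging that the decoder halts after reading \emph{precisely} $q-m$ raw bits, neither stopping early nor reading ahead, so that the overhead charged per gap is a single additive $\bigoo$ rather than a quantity growing with $q-m$ (which would wreck the cancellation). This is what forces the frontier discipline above, that is, advancing the oracle only when no member of $(r_n)$ has yet been detected at or below the current frontier, and only ever running the finitely many indices $n'\le R$. The initial-segment use guaranteed by pointedness is what makes ``advance only on demand'' compatible with recognising $q$ before position $q$ is ever read, and it is the feature that fails for a general $X$-computable sequence, matching the counterexample promised in the introduction.
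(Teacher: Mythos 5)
Your proposal is correct and is essentially the paper's own argument: both describe the initial segment ending at the next term $q$ of $(r_n)$ by a shortest $U$-description of a shorter prefix $X\restr_m$ followed by the raw intervening bits, with pointedness plus the consistency of $\Phi$ guaranteeing that the decoder can recognise $q$ and stop reading at exactly the right moment, yielding $K(X\restr_q)\le K(X\restr_m)+(q-m)+\bigoo$ and hence Schnorr's criterion. The only differences are presentational: the paper argues the contrapositive (a compressible prefix $X\restr_t$ with $t\in(r_n,r_{n+1}]$ propagates to a compressible $X\restr_{r_{n+1}}$) and defines its machine $M$ as a c.e.\ set of description/output pairs indexed by vectors $(\rho,\sigma,n,t)$, verifying prefix-freeness by an explicit case analysis, whereas your sequential-reading decoder gets prefix-freeness for free at the cost of the termination and unique-detection arguments you supply (which are correct, and correspond to the same uses of totality and consistency of $\Phi$ that appear in the paper's verification).
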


It is well-known that there are reals which are not \ml random, yet
 have infinitely many incompressible initial segments.
Hence Theorem \ref{fSVU93INe}
does not hold if we simply waive the requirement that $(r_n)$ is pointedly computable from $X$.
One may ask, however,  if
Theorem \ref{fSVU93INe} continues to hold if we merely require that 
$(r_n)$ is computable from $X$ and not that it is pointedly $X$-computable.
It is not surprising that the latter question has a negative answer. 
One way to exhibit an example witnessing this fact, is to construct a real
with infinitely many incompressible initial segments, which computes the halting problem
and is not \ml random.
Since the \pf complexity function is computable from the halting problem
$\emptyset'$, given such an oracle $X$ we can effectively find infinitely many  $t$ such that $K(X\restr_t)\geq t-c$.
This gives the following fact.

\begin{prop}\label{NRuMPE7kEF}
Suppose that $X$ computes the halting problem, $X$ is not \ml random and there
exists some constant $c$ and  infinitely many $n$ such that $K(X\restr_n)\geq n-c$. 
Then $X$ computes an increasing sequence $(r_n)$ such that $K(X\restr_{r_n})\geq r_n-c$ for all $n$.
\end{prop}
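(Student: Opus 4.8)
The plan is to read off everything from the hypotheses directly, since the content of the proposition is almost entirely contained in the assumptions. The key observation is that we are handed three facts about $X$: it computes $\emptyset'$, it is not \ml random, and there is a fixed constant $c$ with $K(X\restr_n)\geq n-c$ for infinitely many $n$. The work is to turn the \emph{existence} of infinitely many incompressible initial segments into an $X$-computable (hence effective, relative to $X$) \emph{enumeration} of such lengths into an increasing sequence.

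First I would use the fact that the \pf complexity function $K$ is computable from $\emptyset'$. Concretely, the graph of $K$ is $\Delta^0_2$: to decide whether $K(\sigma)\leq k$ one asks $\emptyset'$ whether the universal \pf machine ever outputs $\sigma$ on some program of length $\leq k$, and the value $K(\sigma)$ is obtained by locating the least such $k$. Since $X\geq_T\emptyset'$, the map $\sigma\mapsto K(\sigma)$ is computable from $X$. In particular, given any length $t$, the oracle $X$ can compute the string $X\restr_t$ (trivially, by reading its own first $t$ bits) and then compute $K(X\restr_t)$, and so can decide whether $K(X\restr_t)\geq t-c$.

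Next I would define the sequence recursively. Set $r_0$ to be the least $t$ with $K(X\restr_t)\geq t-c$, which exists because infinitely many such $t$ exist, and is found by $X$ by testing $t=0,1,2,\dots$ in turn until the first success. Having defined $r_n$, let $r_{n+1}$ be the least $t>r_n$ with $K(X\restr_t)\geq t-c$; again this search terminates because the set of witnesses is infinite. Each step is an $X$-effective unbounded search over a decidable (relative to $X$) predicate, so $(r_n)$ is computed from $X$, and it is strictly increasing by construction. By definition every term satisfies $K(X\restr_{r_n})\geq r_n-c$, which is exactly the desired conclusion.

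I do not expect any genuine obstacle here; the statement is essentially a packaging of the hypotheses, and the only point requiring a word of care is the justification that $K$ is $X$-computable, which rests solely on $K\leq_T\emptyset'\leq_T X$. It is worth emphasising that the sequence produced is merely $X$-computable and in general \emph{not} pointedly $X$-computable: the computation of $r_n$ invokes $\emptyset'$ and hence reads initial segments of $X$ far longer than $r_n$, so this construction respects the distinction drawn in Definition \ref{GmcyBzTi6F} and confirms that the hypothesis of Theorem \ref{fSVU93INe} cannot be weakened from pointed $X$-computability to mere $X$-computability.
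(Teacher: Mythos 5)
Your proposal is correct and follows essentially the same route as the paper, which justifies the proposition in one sentence by noting that $K$ is computable from $\emptyset'$ and hence from $X$, so that $X$ can effectively search out successive lengths $t$ with $K(X\restr_t)\geq t-c$. Your write-up merely makes the recursive search explicit; the observation that the resulting sequence need not be pointedly $X$-computable is also consistent with the paper's intent.
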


In Section \ref{J6R5qSplUM} we present two ways of constructing oracles $X$ which have the properties 
mentioned in 
Proposition \ref{NRuMPE7kEF}, thus establishing the following.

\begin{thm}\label{qZcue1XzUo}
There exists a real $X$ 
and an $X$-computable increasing sequence $(r_n)$, such that 
$r_n<K(X\restr_{r_n})$ for all $n$, and $X$ is not \ml random.
\end{thm}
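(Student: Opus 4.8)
The plan is to build a real $X$ together with an $X$-computable increasing sequence $(r_n)$ so that every initial segment $X\restr_{r_n}$ is maximally incompressible, $r_n < K(X\restr_{r_n})$, while ensuring $X$ fails to be \ml random. The two requirements pull in opposite directions, so the strategy is to interleave two kinds of blocks along $X$: \emph{random blocks}, on which we splice in bits of a genuinely \ml random real (e.g.\ $\Omega$ or the bits of a fixed random sequence), and \emph{sacrificial blocks}, on which we deliberately encode highly compressible data that defeats randomness. The key to making this work is Proposition \ref{NRuMPE7kEF}: we arrange that $X$ computes $\emptyset'$, so that $X$ can locate the positions $t$ at which $K(X\restr_t)\geq t-c$, and then $X$ itself picks out the sequence $(r_n)$ landing inside the random blocks where incompressibility holds.

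First I would fix the coding scheme. Partition $\Nat$ into consecutive intervals $I_0 < I_1 < I_2 < \cdots$, using even-indexed intervals for random blocks and odd-indexed intervals for the coding that both makes $X$ compute $\emptyset'$ and destroys randomness. On the random blocks I copy bits of a fixed \ml random real $R$; a standard counting/Kraft–Chaitin argument then guarantees that for a suitable constant $c$ there are infinitely many $t$ (lying at the right-hand ends of random blocks) with $K(X\restr_t) \geq t - c$, and by choosing the block lengths to grow fast enough we can push this up to the strict bound $r_n < K(X\restr_{r_n})$ rather than merely $r_n - c$. On the odd blocks I run a permitting- or coding-style construction so that $\emptyset' \leq_T X$: the simplest route is to code $\emptyset'$ bit-by-bit into a sparse, easily recognisable subsequence of the odd blocks, which simultaneously makes those initial segments compressible and hence violates \ml randomness at the relevant lengths.

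Next I would verify the three conclusions in turn. That $X$ is not \ml random follows because the coding of $\emptyset'$ into sparse, predictable positions yields initial segments of $X$ (ending in odd blocks) whose complexity drops well below their length, so no constant $c$ can witness $\forall n\, K(X\restr_n) > n-c$. That $\emptyset' \leq_T X$ follows by construction from the coding in the odd blocks; combined with the infinitely-many-incompressible-initial-segments property established on the random blocks, $X$ meets the hypotheses of Proposition \ref{NRuMPE7kEF}. Finally, since $X$ computes $\emptyset'$ and the \pf complexity function $K$ is $\emptyset'$-computable, $X$ can effectively search for and enumerate, in increasing order, exactly those lengths $t$ with $K(X\restr_t) \geq t$ (the threshold being attainable by the growth choice above); setting $(r_n)$ to be this $X$-computable enumeration gives an $X$-computable increasing sequence with $r_n < K(X\restr_{r_n})$ for all $n$.

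The main obstacle I anticipate is the tension between coding enough of $\emptyset'$ to get $\emptyset' \leq_T X$ and preserving maximal incompressibility on the random blocks. Coding information into $X$ inevitably lowers complexity locally, so I must confine all coding to the odd blocks and keep them sparse enough that they do not drag down the complexity of the initial segments ending at the $r_n$ inside the random blocks. Quantitatively, this means balancing block lengths so that (i) the random blocks are long relative to the preceding coded material, ensuring $K(X\restr_{r_n}) \geq r_n$ survives the overhead of describing the coded prefix, and (ii) the coding remains decodable by $X$ to recover $\emptyset'$. The strict inequality $r_n < K(X\restr_{r_n})$, as opposed to the weaker $r_n - c$, is the delicate point: it requires that the random-block contribution exceed the total length by a genuine (growing) margin, which I would secure by making each random block substantially longer than everything preceding it, so that the incompressibility of the fresh random bits dominates the bounded description cost of the structured prefix.
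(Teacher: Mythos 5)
Your construction has a genuine gap at its central step: the claim that the initial segments ending at the right-hand ends of the random blocks satisfy $K(X\restr_t)\geq t-c$, and that this can be pushed to the strict bound $K(X\restr_t)>t$ by letting the blocks grow fast. This is false, because compressibility of a prefix persists under extension whenever the cut point is simple. For any strings $\sigma,\tau$ one has $K(\sigma\ast\tau)\leq K(\sigma)+|\tau|+K(|\tau|)+\bigo{1}$, so if $\sigma$ has deficiency $d=|\sigma|-K(\sigma)$, then every extension of $\sigma$ cut at a point of low complexity has deficiency at least $d-K(|\tau|)-\bigo{1}$. In your construction the block boundaries form a computable sequence, so $K(|\tau|)=\bigo{\log i}$ at block index $i$; meanwhile coding $\emptyset'$ bit-by-bit into the odd blocks forces the deficiency at the block boundaries to diverge, since the $k\approx i/2$ coded bits sit at computable positions and contribute only $K(\emptyset'\restr_k)=\bigo{\log k}$ to the complexity, giving $K(X\restr_m)\leq m-k+\bigo{\log i}$ there. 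Hence for every constant $c$, all but finitely many of your block ends $t$ satisfy $K(X\restr_t)<t-c$, the opposite of what you claim. Note that the persistence bound does not involve the length of $\tau$ at all: making each random block ``substantially longer than everything preceding it'' buys nothing, because what matters is the complexity of the cut position, not the amount of fresh randomness appended. The intuition that incompressible bits ``dominate'' the bounded description cost of the prefix is backwards: the gap $|\sigma|-K(\sigma)$ is never closed by appending literally described bits. (Choosing $R=\Omega$, as you suggest, is independently fatal: sufficiently long initial segments of $\Omega$ compute the corresponding initial segments of $\emptyset'$, so the coded bits become redundant given the random blocks and create additional deficiency everywhere beyond them.)

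The constraint your construction runs into is intrinsic, and both of the paper's constructions are engineered precisely around it. By symmetry of information, $K(X\restr_t)>t$ forces $K(t\mid (X\restr_m)^{\ast})\geq m-K(X\restr_m)-\bigo{1}$ for every $m<t$; that is, each incompressible length must have conditional complexity exceeding the deficiency accumulated below it, so once the deficiency is unbounded these lengths cannot lie on a fixed computable sequence of positions laid down in advance (at least not with your coding density). The paper's first construction sidesteps this by starting from a \ml random $Y\geq_T\emptyset'$ and inserting single zeros: the incompressible prefixes $\sigma$ with $K(\sigma)>|\sigma|$ are located by a search using $\Phi^{Y}=K$ before each zero is inserted, and the resulting lengths are retraceable from $X$ but highly non-computable. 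The second construction proves, via the relativized counting theorem (Lemma \ref{TvutaENxZq}, yielding \eqref{Shff9gJ5ND}), that every $\sigma$ extends to some $\rho$ with $K(\rho)>|\rho|$ within $2^{c+|\sigma|}$ additional bits --- the window must be exponentially long exactly because the chosen cut length $n$ has to satisfy $K(n\mid\sigma^{\ast})>|\sigma|-K(\sigma)+c$ --- and only the padding blocks of zeros, not the incompressible lengths $|\rho_i|$, sit at computable positions. Your outer frame (coding $\emptyset'$ so that $X$ can enumerate its own incompressible lengths, as in Proposition \ref{NRuMPE7kEF}) is sound, but the existence of infinitely many $t$ with $K(X\restr_t)>t$ is precisely the part your block scheme fails to deliver.
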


Our first construction of such $X$ involves starting from a \ml random $Y$ which computes the
halting problem, and inserting zeros at certain places, thus causing $X$ to be non-random, while preserving its
ability to calculate lengths at which its initial segments have high \pf complexity.
The second construction of such an oracle $X$ is more flexible, and gives  a real which is
highly non-random, in the sense that its characteristic sequence has a computable subsequence of zeros.

\subsection{Related concepts and results from the literature}
A central notion studied in this paper is that of a set $X$ which is able to compute a sequence of positions in
its binary expansion where the corresponding initial segments are incompressible. Clearly
every \ml random has this property, but there are also many reals with this property
which are not \ml random. This notion might remind some readers of the {\em autocomplex reals}
(see \cite{Kjos-HanssenMS06,Kjos-HanssenMStrans} or \cite[Section 8.16]{rodenisbook}), which
are the reals $X$ which compute a non-decreasing unbounded function $f$ such that
$K(X\restr_n)\geq f(n)$ for all $n$.
Moreover, Theorem \ref{fSVU93INe} has  similarities to a result by
Miller and Yu \cite{milleryutran2}, which says that if $\sum_i 2^{-g(i)}<\infty$ and $g$ is
computable from $X$ with identity oracle-use, then $\exists c\ \forall n\ K(X\restr_n)\leq n+ g(n)+c$. 
Note that the latter result can be seen as an extension of the following consequence of the
Kraft-Chaitin  inequality:
\begin{equation}\label{OtnW7AMnku}
\parbox{12cm}{if $g$ is computable and $\sum_i 2^{-g(i)}<\infty$  then
there exists a constant $c$ such that for all $X$ and all $n$ we have
$K(X\restr_n)\leq n+ g(n)+c$.}
\end{equation}
It is clear that the result of 
Miller and Yu \cite{milleryutran2} is related to its special case 
\eqref{OtnW7AMnku} in the same way as
our Theorem \ref{fSVU93INe} is related to results  originally obtained by
Fortnow (unpublished) and \NST
\cite{MR2140044}, discussed earlier.

\section{Proof of Theorem \ref{fSVU93INe}}\label{7ZZctiXHJr}
We need to show that 
if $(r_n)$ is pointedly $X$-computable, 
then $X$ is \ml random if and only if
$\exists c\ \forall n\ K(X\restr_{r_n})>r_n-c$.

The `only if' direction in this statement is trivial, so it remains to show the `if' direction. 
We prove the contrapositive. Assuming that $X$ is not \ml random
and $(r_n)$ is pointedly $X$-computable, we show that for each constant $c$
there exists some $n$ such that $K(X\restr_{r_n})\leq r_n-c$.
Let $\Phi$ be
a Turing functional such that $\Phi^X(n)=r_n$
for all $n$, and such that
for every $n,r, \rho$ such that $\Phi^{\rho}(n)\de=r$ we have
$\Phi^{\rho\restr_r}(n)\de$,
$|\rho|>n$, $|\rho|\geq r$ and $\Phi^{\rho}(i)\de$ for all $i<n$.
Let $U$ be the underlying optimal universal \pf machine.

We define a \pfm $M$ which makes use of the descriptions of $U$ as follows.
For each vector $(\rho,\sigma,n,t)$, such that: 
\begin{itemize}
\item $\Phi^{\rho}(i)\de$ for all $i\leq n+1$ and $\Phi^{\rho}(i)<\Phi^{\rho}(i+1)$ for all $i< n+1$; 
\item $t\in (\Phi^{\rho}(n),\Phi^{\rho}(n+1)]$ and $U(\sigma)=\rho\restr_t$;   
\end{itemize}
let $\tau$ be the digits of $\rho$ between
digit $t$ and digit  $\Phi^{\rho}(n+1)$ and 
note that
\[
\parbox{12cm}{ $U(\sigma)\ast\tau=\rho\restr_{\Phi^{\rho}(n+1)}$,
$\Phi^{U(\sigma)\ast\tau}(n+1)=\Phi^{\rho}(n+1)$ and  $\Phi^{U(\sigma)\ast\tau}(n)=\Phi^{\rho}(n)$.
}
\]
Let $M$ describe $\rho\restr_{\Phi^{\rho}(n+1)}$ with the string $\sigma\ast\tau$, defining
$M(\sigma\ast\tau)=\rho\restr_{\Phi^{\rho}(n+1)}$.
This completes the definition of $M$.

It follows immediately from the definition that  $M$ is effectively calculable. Next we show that $M$ does not allocate two different strings the same
description. Given two identical $M$-descriptions  $\sigma_0\ast\tau_0=\sigma_1\ast\tau_1$,
since $U$ is a \pfm we must have $\sigma_0=\sigma_1$ and $\tau_0=\tau_1$.
By the construction of $M$, the string described in both cases is then $U(\sigma_0) \ast \tau_0$. 
In a similar manner,  we may show that $M$ is a \pfmn.  Suppose that 
$\sigma_0\ast\tau_0 \subseteq \sigma_1\ast\tau_1$ are two descriptions issued by $M$,
resulting from the vectors  $(\rho_0,\sigma_0,n_0,t_0)$, $(\rho_1,\sigma_1,n_1,t_1)$ respectively. 
Since $U$ is a \pfm and $\sigma_0,\sigma_1$ are $U$-descriptions, we have $\sigma_0=\sigma_1$,
$t_0=t_1$ and  
$\tau_0\subseteq \tau_1$. 
Let $\sigma$ be $\sigma_0$ and let $t$ be $t_0$.
Then $\rho_0\subseteq \rho_1$ and hence\footnote{An intuitive description of the argument that follows
for showing that $\tau_0=\tau_1$ is this: each of $\rho_0,\rho_1$ determine a sequence 
$\Phi^{\rho_0}(i), i\leq n_0+1$, although $\rho_1$ may be able to define $\Phi^{\rho_1}(i)$ for $i>n_0+1$. 
However $t\in (\Phi^{\rho_0}(n_0), \Phi^{\rho_0}(n_0+1)]$ so by the construction of $M$ we have that both
$\tau_0, \tau_1$ equal the digits of $\rho_0$ (or equivalently $\rho_1$) between digit $t$ and digit 
$\Phi^{\rho_0}(n_0+1)$.}
\begin{equation}\label{v7af1UQRYd}
\Phi^{\rho_0}(n_0+1)=\Phi^{U(\sigma)\ast\tau_0}(n_0+1)=\Phi^{U(\sigma)\ast\tau_1}(n_0+1)=\Phi^{\rho_1}(n_0+1). 
\end{equation}
Thus 
$(\Phi^{\rho_0}(n_0), \Phi^{\rho_0}(n_0+1)]= (\Phi^{\rho_1}(n_0), \Phi^{\rho_1}(n_0+1)]$, and $t$ belongs to both intervals. This means that $n_0=n_1$, 
because otherwise
$t_1$ would have to belong to a different interval, not the one determined by $n_0$ and $\rho_1$,
since the values of $\Phi$ are strictly monotone. This would be a contradiction by the choice of
$(\rho_1, \sigma_1, n_1, t_1)$ in the definition of $M$ and the fact that $t_0=t_1$ which was established earlier.
So let $n=n_0=n_1$.
Since $U(\sigma)\ast\tau_0=\rho_0\restr_{\Phi^{\rho_0}(n+1)}$ and
$U(\sigma)\ast\tau_1=\rho_1\restr_{\Phi^{\rho_1}(n_0+1)}$, by \eqref{v7af1UQRYd}
the strings
$U(\sigma)\ast\tau_0$, $U(\sigma)\ast\tau_1$ have the same length, so $\tau_0=\tau_1$,
which shows that the two descriptions $\sigma_0\ast\tau_0$,  $\sigma_1\ast\tau_1$
are identical. This completes the proof that $M$ is a \pfmn.

It remains to show that if $X$ is not \ml random, then for each $c$ there exists some $n$
with $K(X\restr_{r_n})\leq r_n-c$.  
Let $d$ be a constant such that $K(\eta)\leq K_M(\eta)+d$ for all strings $\eta$.
Given any constant $c$, since $X$ is not \ml random,
there exists some $t>0$ such that  $K(X\restr_{t})\leq t-c-d$. Let $n$ be such that
$t\in (r_n, r_{n+1}]$. Then $M$ will describe $X\restr_{r_{n+1}}$
with $\sigma\ast\tau$ where $\sigma$ is the shortest description of 
$X\restr_{t}$ and the length of $\tau$ is $r_{n+1}-t$. The length of $\sigma$ is
$K(X\restr_{t})$, which is at most $t-c-d$. We have: 
\[
K_M(X\restr_{r_{n+1}})\leq (t-c-d) + (r_{n+1}-t)=r_{n+1}-c-d.
\]
So $K(X\restr_{r_{n+1}})\leq r_{n+1}-c$, as required.

\section{Proof of Theorem \ref{qZcue1XzUo}}\label{J6R5qSplUM}
As discussed in the introduction, we present two different constructions
of a real $X$ which computes an increasing sequence $(r_n)$ such that
$K(X\restr_{r_n})>r_n$ for all $n$ and $X$ is not \ml random.

\subsection{An ad hoc construction}
One way to construct a real $X$ with the property of Proposition \ref{NRuMPE7kEF} is to start
from a a \ml random real $Y$ which computes the halting problem, and insert zeros
into $Y$ in a way that does not change the fact that $\emptyset'$ is computable from the resulting oracle, but does ensure non-randomness. Recall that  a \ml random real $Y$ which computes the halting problem
exists by the \KG theorem  \cite{MR820784,MR859105}.
In this construction we use the result of Chaitin \cite{MR0411829}, which asserts that: 
\begin{equation}\label{2qm7AKqzzj}
\parbox{10cm}{if $Z$ is \ml random then $\lim_s \big(K(Z\restr_s)-s\big)=\infty$.}
\end{equation}
We also use the fact that for each 
real $Z$ which is \ml random and each string $\sigma$, the real $\sigma\ast Z$ is \ml random, and the fact from
\cite{chaitinincomp} that:
\begin{equation}\label{6aaKXae54g}
\parbox{11cm}{if $Z$ is \ml random and $f$ is a partial computable function on strings, then
if $f(Z\restr_n)\de$ for infinitely many $n$, there are infinitely many $t$ such that
$f(Z\restr_t)\de\neq Z(t)$.}
\end{equation}
The reader may observe that a 
partial computable prediction rule $f$ as above which is always successful on $Z$ would give rise to
a computable martingale which succeeds on $Z$, which we know is not possible for \ml random reals
(\eg see \cite[Section 6.3]{rodenisbook}). 

Let $\Phi$ be a functional via which $Y$ computes the complexity function $K$, i.e.\ such that $\Phi^Y(\sigma)=K(\sigma)$ for all $\sigma$. We form $X$ from $Y$ by inserting 0s at various positions, in a stage by stage process. The real $X$ is defined as the limit of a sequence $X_s$ and we form each $X_{s+1}$ from $X_s$ by inserting a 0 at position $t_{s+1}$, which means that we define $X_{s+1}(n)=X_s(n)$ for $n<t_{s+1}$, $X_{s+1}(n)=0 $ for $n=t_{s+1}$ and $X_{s+1}(n+1)=X_s(n)$ for $n\geq t_{s+1}$.  
At stage 0 we define $X_0=Y$, and (for convenience) define $t_0=-1$. Now inductively suppose that we have performed stages $0,\dots,k$, and that we have recorded $t_0,\dots,t_k$. For any string $\tau\subset X_{k}$, let $\tau^{\ast}$ be the string which results from removing all of the 0s that we have inserted during the stages $\leq k$. 
At step $k+1$ we search for $\sigma \subset X_k$ of length  $>t_{k}+2$ and $\tau \subset X_{k}$ with $\sigma \subset \tau$ such that $\Phi^{\tau^{\ast}}$ 
 computes $K(\sigma)$ and 
$K(\sigma)>|\sigma|$. By \eqref{2qm7AKqzzj}, it follows that such $\sigma$ and $\tau$ exist. 
Then we define $t_{k+1}=|\tau|$ and insert a 0 at position $t_{k+1}$.

This completes the construction of $X$ given $Y$. From the  construction it follows that $X$ computes both $Y$ and
the sequence $(t_k)$. This follows because $X$ is able to retrace the construction. Inductively suppose that $X$ has been able to retrace the construction up until the end of stage $k$, and so knows the values $t_0,\dots,t_k$. Then, using the oracle for $X$ we can perform the same search that was carried out at stage $k+1$, but using $X$ rather than $X_k$:  we search for $\sigma \subset X$ of length  $>t_{k}+2$ and $\tau \subset X$ with $\sigma \subset \tau$ such that $\Phi^{\tau^{\ast}}$ 
 computes $K(\sigma)$ and 
$K(\sigma)>|\sigma|$. Since the next zero is inserted after $\tau$, the result of the search is the same as when $X_k$ was used during the construction at stage $k+1$. Then $t_{k+1}=|\tau|$, completing the induction step. Therefore $X$ computes the halting set $\emptyset'$. 
Moreover there is a partial computable function $f$ such that for each $\sigma\subset X$ we have $f(\sigma)\de$
if and only if $\sigma=X\restr_{t_k}$ for some $k$ ($f$ uses $\sigma$ as an oracle to try and retrace the construction and converges on $\sigma$ if it is of length $t_k$ for some $k$ in the retraced construction). For each $k$, the next digit $t_k$ of $X$ is a 0, so $f$ is a partial computable prediction rule that succeeds on $X$,
which means that $X$ is not \ml random.
This completes the construction of a set $X$
with the properties of Proposition \ref{NRuMPE7kEF}.

\subsection{A refined construction}
Here we construct the required $X$ by finite extensions.
This construction can be combined with other requirements.
For example, $X$ can be highly non-random, in the sense that it has a computable sequence of 0s.
We need some facts from the theory of \pf Kolmogorov complexity.
For each string $\sigma$, let $\sigma^{\ast}$ denote the shortest \pf description of $\sigma$
(if there are many shortest descriptions, we consider the one which describes $\sigma$ first).
Also let $K(\tau\ |\ \rho)$ denote the \pf complexity of $\tau$ relative to string $\rho$.
The following is a relativized version of Chaitin's counting theorem from \cite{MR0411829}.

\begin{lem}[Relativized counting theorem]\label{TvutaENxZq}
There exists a constant $c$ such that for all $\sigma,r$ and all $n>\sigma$:
\[
\Big|\Big\{\tau\ \big|\ \sigma\subseteq\tau\wedge\tau\in 2^n \wedge 
K(\tau\ |\ \sigma^{\ast})\leq |\tau|-r-K(\sigma)\Big\}\Big|\leq
2^{n-K(\sigma)+c-r-K(n\ |\ \sigma^{\ast})}.
\]

\end{lem}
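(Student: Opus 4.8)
The plan is to prove the relativized counting theorem by the standard weight-counting / Kraft–Chaitin argument, adapted to carry the oracle $\sigma^{\ast}$ throughout. Recall that the classical (unrelativized) Chaitin counting theorem says that the number of strings $\tau$ of length $n$ with $K(\tau)\leq n-r$ is at most $2^{n-r+O(1)}$; the proof builds a prefix-free set of short descriptions and invokes the Kraft inequality to bound the count. Here the extra features are: we restrict to $\tau$ extending a fixed prefix $\sigma$, we measure complexity conditioned on $\sigma^{\ast}$, and we want the bound to reflect both $K(\sigma)$ and $K(n\mid\sigma^{\ast})$. The guiding heuristic is that describing such a $\tau$ costs at least $K(\sigma)$ bits for the prefix $\sigma$, plus $K(n\mid\sigma^{\ast})$ bits to pin down the length, plus the remaining $|\tau|-r-K(\sigma)$ bits allotted by the complexity bound — so the ``free'' bits number only about $n - K(\sigma) - K(n\mid\sigma^{\ast}) - r$, and a weight/counting argument should cap the cardinality at $2$ to that power, up to a constant.

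\textbf{Main steps.} First I would fix $\sigma$, $r$, and $n>|\sigma|$ and let $S$ denote the set whose cardinality we wish to bound, namely the $\tau\in 2^n$ with $\sigma\subseteq\tau$ and $K(\tau\mid\sigma^{\ast})\leq |\tau|-r-K(\sigma)$. Each such $\tau$ has a shortest description relative to $\sigma^{\ast}$ of length at most $n-r-K(\sigma)$, so $\sum_{\tau\in S} 2^{-K(\tau\mid\sigma^{\ast})}\geq |S|\cdot 2^{-(n-r-K(\sigma))}$. On the other hand, since the prefix-free machine computing $K(\cdot\mid\sigma^{\ast})$ obeys the Kraft inequality, this sum is at most $1$; combining gives immediately $|S|\leq 2^{n-r-K(\sigma)}$. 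That is the easy half, but it is missing the crucial $K(n\mid\sigma^{\ast})$ saving. To extract that extra factor, I would instead enumerate $S$ effectively relative to $\sigma^{\ast}$: the set $\{\tau\in 2^n : K(\tau\mid\sigma^{\ast})\leq n-r-K(\sigma)\}$ is c.e.\ relative to $\sigma^{\ast}\oplus\emptyset'$, and knowing the threshold $n-r-K(\sigma)$ plus the length $n$ lets us list its members. The point is that once we know $|S|$ (or an upper bound of the right order), each element of $S$ can be specified by its index in this enumeration, which takes about $\log|S|$ bits, \emph{plus} the overhead of describing $n$ and the threshold relative to $\sigma^{\ast}$.

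\textbf{The key counting inequality.} Concretely, I would argue as in Chaitin's original counting theorem but conditioned on $\sigma^{\ast}$: suppose toward the bound that $|S|> 2^{n-K(\sigma)+c-r-K(n\mid\sigma^{\ast})}$ for a constant $c$ to be determined. Then I would build, relative to oracle $\sigma^{\ast}$, a prefix-free machine that on a description consisting of (a self-delimiting code for $n$ of length $K(n\mid\sigma^{\ast})+O(1)$) followed by (the binary index of $\tau$ within the enumeration of $S$, of length roughly $\log|S|$) outputs $\tau$. This yields $K(\tau\mid\sigma^{\ast})\leq K(n\mid\sigma^{\ast}) + \log|S| + O(1)$ for every $\tau\in S$. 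But the enumeration of $S$ only becomes correct once we know the cutoff, which itself depends on $K(\sigma)$; the honest accounting is that describing any $\tau\in S$ via its position costs $\log|S| + K(n\mid\sigma^{\ast}) + O(1)$ bits, and this must be at least the true conditional complexity of the ``last-enumerated'' or a generic element, forcing $\log|S|\leq n - r - K(\sigma) - K(n\mid\sigma^{\ast}) + c$ for an absolute constant $c$, which rearranges to exactly the claimed bound. The usual care is needed to make the constant $c$ independent of $\sigma$, $r$, $n$: this works because the only nonuniform data fed to the machine are $n$ and the index, both accounted for explicitly, while the dependence on $\sigma$ enters only through the oracle $\sigma^{\ast}$ and through the term $K(\sigma)$ (available from $\sigma^{\ast}$ itself, since $\sigma^{\ast}$ determines $\sigma$ and hence $K(\sigma)=|\sigma^{\ast}|$).

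\textbf{Main obstacle.} The delicate point — and the step I expect to absorb most of the effort — is extracting the $K(n\mid\sigma^{\ast})$ saving cleanly, i.e.\ showing that the length $n$ need be encoded only to its \emph{conditional} complexity rather than to $\log n$ bits, while simultaneously keeping the additive constant $c$ uniform over all $\sigma$. This requires running the enumeration of $S$ inside a machine that already has $\sigma^{\ast}$ on its oracle tape, so that $\sigma$, $K(\sigma)=|\sigma^{\ast}|$, and the complexity cutoff $n-r-K(\sigma)$ are all recoverable once $n$ and $r$ are supplied; the Kraft inequality for the resulting oracle prefix-free machine then closes the argument. I would take care that the self-delimiting encoding of $n$ relative to $\sigma^{\ast}$ genuinely achieves length $K(n\mid\sigma^{\ast})+O(1)$, and that the index of $\tau$ within the (oracle-)enumeration of $S$ is itself bounded by $\log|S|+O(1)$, so that the two contributions add without hidden logarithmic losses.
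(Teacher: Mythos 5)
Your opening Kraft-inequality bound $|S|\leq 2^{n-r-K(\sigma)}$ is correct, and you rightly identify that the entire content of the lemma is the extra saving of $K(n\ |\ \sigma^{\ast})$ bits; but the mechanism you propose for extracting that saving cannot work, because it produces inequalities pointing in the wrong direction. Your oracle machine that decodes ``(code for $n$) followed by (index of $\tau$ in an enumeration of $S$)'' yields an \emph{upper} bound $K(\tau\ |\ \sigma^{\ast})\leq K(n\ |\ \sigma^{\ast})+\log|S|+O(1)$ for $\tau\in S$. But every $\tau\in S$ already satisfies the upper bound $K(\tau\ |\ \sigma^{\ast})\leq n-r-K(\sigma)$ by definition, and two upper bounds on the same quantity can never yield a contradiction. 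Your final step (``this must be at least the true conditional complexity of the last-enumerated element, forcing $\log|S|\leq n-r-K(\sigma)-K(n\ |\ \sigma^{\ast})+c$'') silently invokes a \emph{lower} bound $K(\tau\ |\ \sigma^{\ast})\geq \log|S|+K(n\ |\ \sigma^{\ast})-O(1)$ for some $\tau\in S$. That is not something your machine provides: counting descriptions only gives some $\tau$ with $K(\tau\ |\ \sigma^{\ast})\geq\log|S|$, without the crucial $K(n\ |\ \sigma^{\ast})$ term, and with that weaker bound your argument collapses back to the trivial estimate. Indeed, the missing lower bound is essentially a restatement of the lemma itself. There are also secondary defects in the machine construction: to enumerate $S$ it must know the cutoff $n-r-K(\sigma)$, hence $r$, which is encoded nowhere in its input (padding tricks fail because a prefix-free machine cannot detect the end of its input), and it has no way of knowing how many index bits to read.

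The paper's proof goes in the opposite, and correct, direction: instead of describing members of $S$ from $n$, it converts the descriptions of members of $S$ into mass on $n$. Define $F(n\ |\ \sigma^{\ast})$ by letting $2^{-F(n\ |\ \sigma^{\ast})}$ be the total weight of all $U$-descriptions relative to $\sigma^{\ast}$ of strings of length $n$ extending $\sigma$ (note that $\sigma=U(\sigma^{\ast})$ is recoverable from the condition, as you yourself observed). The map $n\mapsto 2^{-F(n\ |\ \sigma^{\ast})}$ is a discrete semimeasure, enumerable relative to $\sigma^{\ast}$ uniformly in $\sigma$, so by the coding theorem (the minimality of $K$ as an information measure) there is a single constant $c$ with $2^{-F(n\ |\ \sigma^{\ast})}<2^{-K(n\ |\ \sigma^{\ast})+c}$ for all $n$ and $\sigma$. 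Since each $\tau\in S$ contributes at least $2^{-(n-r-K(\sigma))}$ to this weight, it follows that $|S|\cdot 2^{-(n-r-K(\sigma))}<2^{-K(n\ |\ \sigma^{\ast})+c}$, which rearranges to exactly the claimed bound. This push-forward of description weight from the strings $\tau$ to their common length $n$, followed by the coding theorem, is the idea missing from your proposal; no enumeration of $S$, and hence no knowledge of $r$ or of $|S|$, is ever needed.
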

\begin{proof}
Given $\sigma$ we define $F(n\ |\ \sigma^{\ast})$  for $n>|\sigma|$ 
to be the $-\log$ of the weight of the \pf descriptions relative to $\sigma^{\ast}$ 
which describe extensions of $\sigma$ of length $n$. Then
since the relative \pf complexity $K$ is a minimal information measure, 
there exists a constant $c$ such that for all $n,\sigma$,
\begin{equation}\label{iNOCVFzuTZ}
2^{-F(n\ |\ \sigma^{\ast})}< 2^{-K(n\ |\ \sigma^{\ast})+c}.
\end{equation}
We claim that the constant $c$ has the property of the statement of the lemma.
For a contradiction, suppose that this is not the case. Then for some $n$ there are more than
$2^{n-K(\sigma)+c-r-K(n\ |\ \sigma^{\ast})}$ many
$\tau$ such that 
$K(\tau\ |\ \sigma^{\ast})\leq |\tau|-r-K(\sigma)$.
In that case we have 
\[
2^{-F(n\ |\ \sigma)}> 2^{n-K(\sigma)+c-r-K(n\ |\ \sigma^{\ast})}\cdot 2^{-n+r+K(\sigma)}=
2^{c-K(n\ |\ \sigma^{\ast})},
\]
which contradicts
\eqref{iNOCVFzuTZ}. This contradiction concludes the proof of the lemma.
\end{proof}
Recall the symmetry of information fact from \cite{Gacs:74,MR0411829}:
\[
K(\tau)+K(\sigma\ |\ \tau^{\ast})=K(\sigma)+K(\tau\ |\ \sigma^{\ast})+\Omega(1)
\]
where $f=g+\Omega(1)$ for two functions $f,g$ means that 
$|f(n)-g(n)|$ is bounded above.
Since $K(\tau)=K(\sigma,\tau)+\bigo{1}$ for all strings $\sigma,\tau$, 
by the symmetry of information, Lemma \ref{TvutaENxZq} has the following corollary.
\begin{coro}[Relativized counting, again]\label{DvBDInu2O3}
There exists a constant $c$ such that 
\[
\Big|\Big\{\tau\ \big|\ \sigma\subseteq\tau\wedge\tau\in 2^n \wedge 
K(\tau)\leq |\tau|-r\Big\}\Big|\leq
2^{n-K(\sigma)+c-r-K(\sigma\ |\ \tau^{\ast})-K(n\ |\ \sigma^{\ast})}
\]
for all $\sigma,r$ and all $n>\sigma$.
\end{coro}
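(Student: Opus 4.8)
The plan is to obtain Corollary~\ref{DvBDInu2O3} from Lemma~\ref{TvutaENxZq} by a purely arithmetic manipulation: I would rewrite both the defining condition of the counted set and the exponent in the bound, using the symmetry of information together with the identity $K(\tau)=K(\sigma,\tau)+\bigo{1}$ recorded just above the statement. No new construction is needed; the entire content is a change of the complexity measure appearing in the hypothesis, from $K(\tau\ |\ \sigma^{\ast})$ to $K(\tau)$, with the error of this exchange being what produces the extra factors in the exponent.

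First I would extract two consequences of the symmetry of information. Writing it in the pair form $K(\sigma,\tau)=K(\sigma)+K(\tau\ |\ \sigma^{\ast})+\bigo{1}=K(\tau)+K(\sigma\ |\ \tau^{\ast})+\bigo{1}$ and combining with $K(\tau)=K(\sigma,\tau)+\bigo{1}$, one gets $K(\tau\ |\ \sigma^{\ast})=K(\tau)-K(\sigma)+\bigo{1}$ and $K(\sigma\ |\ \tau^{\ast})=\bigo{1}$, the latter holding because $\sigma$ is an initial segment of $\tau$. I would fix constants realising these $\bigo{1}$ terms for use below.

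Next I would localise the counted set inside a set of the form handled by Lemma~\ref{TvutaENxZq}. Let $c_1$ be a constant with $K(\tau\ |\ \sigma^{\ast})\leq K(\tau)-K(\sigma)+c_1$ for all $\sigma\subseteq\tau$. Then for every $\tau$ with $\sigma\subseteq\tau$, $\tau\in 2^n$ and $K(\tau)\leq |\tau|-r$ we have $K(\tau\ |\ \sigma^{\ast})\leq |\tau|-(r-c_1)-K(\sigma)$, so $\tau$ lies in the set counted by Lemma~\ref{TvutaENxZq} with $r$ replaced by $r-c_1$. Applying the lemma bounds the cardinality in question by $2^{n-K(\sigma)+c_0-r+c_1-K(n\ |\ \sigma^{\ast})}$, where $c_0$ is the constant supplied by the lemma.

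Finally I would reinsert the term $-K(\sigma\ |\ \tau^{\ast})$ into the exponent. Since $K(\sigma\ |\ \tau^{\ast})=\bigo{1}$, subtracting it from the exponent costs at most a constant factor, which I would absorb, together with $c_0$ and $c_1$, into a single constant $c$ independent of $\sigma,r,n$; this yields exactly the stated bound. The only real work here is the bookkeeping of the additive error terms: one must verify that a single $c$ serves uniformly for all $\sigma,r$ and all $n>|\sigma|$, and that inserting the nonpositive quantity $-K(\sigma\ |\ \tau^{\ast})$ is legitimate. This last point is precisely where the identity $K(\tau)=K(\sigma,\tau)+\bigo{1}$ (equivalently $K(\sigma\ |\ \tau^{\ast})=\bigo{1}$) is essential, since without control on $K(\sigma\ |\ \tau^{\ast})$ the insertion would strengthen rather than merely reformulate the bound; I expect this to be the one step demanding care.
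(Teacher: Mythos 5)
You have accurately reconstructed the paper's intended argument---the paper's own proof of Corollary \ref{DvBDInu2O3} is exactly the one-liner you expanded---and you even put your finger on the crux; but that crux fails, and the gap is genuine. The claim that $K(\sigma\ |\ \tau^{\ast})=\bigo{1}$ whenever $\sigma\subseteq\tau$ (equivalently, the recorded identity $K(\tau)=K(\sigma,\tau)+\bigo{1}$) is false: a shortest program for $\tau$ determines $\tau$ but not \emph{which} prefix of $\tau$ is $\sigma$, and in fact $K(\sigma\ |\ \tau^{\ast})=K(|\sigma|\ \big|\ \tau^{\ast})+\bigo{1}$, which is unbounded over pairs $\sigma\subseteq\tau$. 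For instance, let $\tau=0^n$ with $n$ a power of two and $\sigma=0^m$, where $m<n$ is chosen with $K(m\ |\ \tau^{\ast})\geq \log n-1$ (at least half of the values $m<n$ qualify, by counting programs); then $K(\sigma\ |\ \tau^{\ast})\geq\log n-\bigo{1}$. Consequently the constant $c_1$ you posit, with $K(\tau\ |\ \sigma^{\ast})\leq K(\tau)-K(\sigma)+c_1$ uniformly, does not exist. What symmetry of information actually yields is $K(\tau\ |\ \sigma^{\ast})=K(\tau)-K(\sigma)+K(\sigma\ |\ \tau^{\ast})+\bigo{1}$, so your localisation step only places the counted set inside the set of Lemma \ref{TvutaENxZq} for the parameter $r-K(\sigma\ |\ \tau^{\ast})-\bigo{1}$; since this varies with $\tau$ one must take the worst case, and the Lemma then gives $2^{n-K(\sigma)+c-r+\max_{\tau}K(\sigma\,|\,\tau^{\ast})-K(n\,|\,\sigma^{\ast})}$, with the unbounded term entering with a \emph{plus} sign---the opposite of what the Corollary asserts.

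This is not a repairable bookkeeping issue within your scheme: the Corollary as printed (which is already ill-formed, since $\tau$ is bound on the left and free on the right) is false under any reading. Indeed, since $K(\sigma\ |\ \tau^{\ast})\geq 0$, it would imply $|\{\tau\in 2^n : \sigma\subseteq\tau\wedge K(\tau)\leq n-r\}|\leq 2^{n-K(\sigma)+c-r-K(n\,|\,\sigma^{\ast})}$. Now take $n$ a power of two, $\sigma=0^m$ for some $m\in[n/4,n/2]$ with $K(m\ |\ n^{\ast})\geq\log n-3$, and $r=n/2-\lceil 3\log\log n\rceil$. Every $\tau=0^{n/2}\rho$ with $\rho\in 2^{n/2}$ satisfies $K(\tau)\leq K(n)+n/2+\bigo{1}\leq n-r$ (a description gives $n$, then copies the $n/2$ bits of $\rho$), so the counted set has size at least $2^{n/2}$, whereas the claimed bound is at most $2^{n/2-\log n+3\log\log n+c+\bigo{1}}$; no constant $c$ survives as $n\to\infty$. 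The statement can be salvaged by adding $K(|\sigma|)$ to the exponent, this being a uniform upper bound on $K(\sigma\ |\ \tau^{\ast})$ up to a constant, and since $K(|\sigma|)\leq K(\sigma)+\bigo{1}$ the weakened corollary still suffices for the only use the paper makes of it, namely \eqref{Shff9gJ5ND}. But as it stands, neither your argument nor the paper's establishes the Corollary, and the step where you declared ``the identity is essential'' is precisely where both break down.
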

Corollary \ref{DvBDInu2O3} is the tool we are going to use for our finite extension construction.
The problem we face is, given a string $\sigma$ to find an extension $\tau$ such that
$K(\tau)>|\tau|$. 
Since there are $2^{n-|\sigma|}$ many extensions of $\sigma$ of length $n$,
by Corollary \ref{DvBDInu2O3}
it suffices to consider $n$ such that
$2^{n-K(\sigma)+c-K(n\ |\ \sigma^{\ast})}<2^{n-|\sigma|}$,
which means that
$|\sigma|+c<K(\sigma)+K(n\ |\ \sigma^{\ast})$ so
\[
K(n\ |\ \sigma^{\ast})>|\sigma|-K(\sigma)+c.
\]
Such a number $n$ clearly exists in the interval $[|\sigma|, |\sigma|+2^{c+|\sigma|-K(\sigma)}]$.
The quantity $|\sigma|-K(\sigma)$ is sometimes called the {\em randomness deficiency} of $\sigma$.
We have shown that:
\begin{equation}\label{Shff9gJ5ND}
\parbox{12cm}{There exists a constant $c$ such that each $\sigma$ can be extended by less than
$2^{c+|\sigma|}-1$ many bits to a string $\tau$ with $K(\tau)>|\tau|$.}
\end{equation}
We are ready to construct the required real $X$ by finite extensions 
\begin{equation}\label{4nfpY6ZwAn}
\rho_0\subset \tau_1\subset\rho_1\subset\tau_2\subset\cdots
\end{equation}
In this construction the lengths $\ell_i:=|\tau_i|$ will be computable while the strings $\rho_i$ will be
chosen so that $K(\rho_i)>|\rho_i|$ for all $i$.
For each $i$ the string $\tau_{i+1}$ will be the concatenation of $\rho_i$ with a string 
$10\dots 0$ such that $|\tau_i|=\ell_i$. So for each $i$ the string $\rho_i$ will be uniformly computable
from $\tau_i$: simply find the first 1 starting from position $|\tau_i|$ in $\tau_i$ and moving to the left,
and if this 1 is at position $t$ then $\rho_i=\tau_i\restr_{t}$. Let $c$ be the constant in \eqref{Shff9gJ5ND}.

Let $\rho_0$ be the empty string $\lambda$, 
so that $K(\rho_0)>|\rho_0|$.
Let $\tau_1$ be the string $\rho_0\ast 10$ and let
$\ell_1=|\rho_0|+2=2$.
Note that the last digit of $\tau_1$ is a 0.
By  \eqref{Shff9gJ5ND}
there exists an extension $\rho_1$ of $\tau_1$
such that $|\rho_1|-|\tau_1|< 2^{c+\ell_1}-1$ and $K(\rho_1)>|\rho_1|$.
Then let $\tau_2$ be the concatenation of $\rho_1$ with a string $10\dots 0$
such that the length $\ell_2:=|\tau_2|$ is $\ell_1+2^{c+\ell_1}$.
Note that $\tau_2$ is longer than $\rho_1$ by at least 2 bits, so
the last digit of $\tau_2$ is a 0.
Similarly, we can choose an extension $\rho_2$ of $\tau_2$ of 
less than $2^{c+\ell_2}-1$ many additional bits such that
$K(\rho_2)>|\rho_2|$.
As before we let $\tau_3$ be the concatenation of $\rho_2$ with a string $10\dots 0$
such that the length of $\tau_3$ is 
$\ell_3:=\ell_2+2^{c+\ell_2}$. 
Note that $\tau_3$ is longer than $\rho_2$ by at least 2 bits, so the
last digit of $\tau_3$ is a 0.

The construction continues similarly, thus defining the computable sequence of lengths
$\ell_{n+1}=\ell_n+2^{c+\ell_n}$ for each $n>0$, where $\ell_1=2$,
and the sequences
\eqref{4nfpY6ZwAn} such that for all $i>0$
we have $K(\rho_i)>|\rho_i|$, $|\tau_i|=\ell_i$ and the last digit of $\tau_i$ is 0.
If we let $X$ be the infinite extensions of the strings \eqref{4nfpY6ZwAn} then we have $X(\ell_i-1)=0$ for all $i$,
so $X$ is not \ml random. On the other hand $X\restr_{\ell_i}=\tau_i$ for all $i$, and since $\tau_i$ uniformly
computes $\rho_i$, we have that the sequence $(|\rho_i|)$ is  $X$-computable.
Finally $K(X\restr_{|\rho_i|})=K(\rho_i|)>|\rho_i|$ for all $i$, which concludes the verification of the 
required properties of $X$.

\section{Conclusion and discussion}
We have generalized the criterion for \ml randomness 
by Fortnow (unpublished) and \NST
\cite{MR2140044},
which says that
if $X$ has incompressible segments of a computable sequence of lengths $(r_n)$, then it is \ml random.
We proved that the condition that $(r_n)$ is computable can be replaced by the weaker condition that
$(r_n)$ is pointedly $X$-computable, in the sense that $r_n$ is uniformly computable from
any extension of $X\restr_{r_n}$.
It is a simple exercise to extend our proof of this fact in order to replace the condition of pointed computation
with a non-uniform version of it, namely that
$K(r_n\ |\ \tau)=\bigo{1}$ for all $n$
and all $\tau\supseteq X\restr_{r_n}$. 
On the other hand, we showed that 
this condition is no longer sufficient for \ml randomness, if we merely require
that $r_n$ be computable from $X$. It would be interesting to refine this analysis and find
exactly what kind of computations are allowed of a sequence $(r_n)$ from an oracle $X$ such that
the \ml randomness of $X$ is equivalent to the segments $X\restr_{r_n}$ being incompressible. 


\end{document}